\newtheorem{theorem}{Theorem}[section]
\newtheorem{dec}{Decision Problem}[section]
\newtheorem{proposition}{Proposition}[section]
\newtheorem{statement}{Statement}[section]
\newtheorem{question}{Question}[section]
\theoremstyle{definition}
\newtheorem{definition}{Definition}[section]
\title{TURING MACHINES CANNOT SIMULATE THE HUMAN MIND
\thanks{\textit{\underline{Citation}}: 
\textbf{Abhinav Muraleedharan. Turing Machines cannot simulate the human mind.}} 
}
\author{
  Abhinav Muraleedharan \\
  University of Toronto \\
  \texttt{\{Abhinav.Muraleedharan@mail.utoronto.ca\}} \\
}
\begin{document}
\maketitle

\begin{abstract}
Can a Turing Machine simulate the human mind? If the Church-Turing thesis is assumed to be true, then a Turing Machine should be able to simulate the human mind. In this paper, I challenge that assumption by providing \textbf{strong} mathematical arguments against the Church-Turing thesis. First, I show that there are decision problems that are computable for humans, but uncomputable for Turing Machines. Next, using a thought experiment I show that a humanoid robot equipped with a Turing Machine as the control unit cannot perform all humanly doable physical tasks. Finally, I show that a quantum mechanical computing device involving sequential quantum wave function collapse can compute sequences that are uncomputable for Turing Machines. These results invalidate the Church-Turing thesis and lead to the conclusion that the human mind cannot be simulated by a Turing Machine. Connecting these results, I argue that quantum effects in the human brain are fundamental to the computing abilities of the human mind.
\end{abstract}
\keywords{Artificial General Intelligence \and Turing Machines \and Human brain \and Halting Problem}
\section{Introduction}
Turing Machines \cite{turing1936a} can certainly simulate some aspects of the human mind. Computers can defeat humans in chess \cite{campbell2002deep}, Go \cite{silver2017mastering} and state of the art AI systems can can recognize images \cite{krizhevsky2012imagenet}, create art \cite{ramesh2022hierarchical}, write poems \cite{brown2020language} and manipulate objects \cite{gao2021kpam}. But, is it possible to build a Turing Machine that is exactly identical to the human mind? Can all properties of the human mind including conciousness, general intellingence be replicated in a Turing Machine? Is there a fundamental difference between the human mind and Turing Machines?\\

Turing himself argued that digital computers can be made indistinguishable from humans \cite{10.1093/mind/LIX.236.433}. In contrast to this, Penrose, in his seminal works \cite{penrose1989emperor, penrose1994shadows}
has proposed that quantum coherence and wave function collapse process is connected to conciousness and computation in human mind is fundamentally different from that of a Turing Machine. Similar to arguments from Lucas \cite{lucas1961minds}, from certain deductions from G\"{o}del's incompletness theorem, Penrose argued that human \textit{understanding} or \textit{conscious thought} is non-algorithmic. Together with Hameroff, Penrose further proposed the Orchestrated Objective Reduction Theory \cite{hameroff1996conscious, hameroff1996orchestrated, hameroff2014consciousness} wherein quantum coherence in microtubules and  objective reduction of the quantum wave function plays a central role in the functioning of the human mind. Similar to Penrose's arguments, Stapp also proposed a quantum model of the mind body system \cite{stapp2004mind, stapp2000importance} with parallel computation aspect governed by the Schr\"{o}dinger equation and action selection governed by quantum wave function collapse. Stapp argued that quantum zeno effect \cite{misra1977zeno} in the brain would prevent environmental decoherence and proposed connection between quantum zeno effect and conscious control.\\

 These arguments were later challenged by Tegmark. In \cite{tegmark2000importance}, Tegmark has shown that the decoherence time scales are typically shorter than relevant dynamical time scales in brain. Hence Tegmark argued that quantum coherence is not related to consciousness and brain is a classical computing system. Although results from Tegmark indicate that quantum coherence would not be preserved in brain for longer timescales, these results cannot be used to rule out quantum effects in the brain completely. To make strong statements about quantum effects and its role in computing capabilities of the brain, one would have to answer the following question:
 
\begin{itemize}
    \item Is a Turing Machine mathematically equivalent to the human mind ? 
\end{itemize}
In this paper I study this question from three different aspects:
\begin{itemize}
\item First, I ask if a Turing Machine is equivalent to a human mathematician. That is, can all statements computed by a human mathematician be computed by a Turing Machine? 

\item Secondly, I introduce a novel thought experiment to ask the question: "Is it possible to build a humanoid robot that can perform all possible humanly doable physical tasks in any humanly operable environments?".

\item Finally, I ask if a quantum mechanical system involving sequential quantum wave function collapse can compute sequences that are uncomputable for a Turing Machine.

\end{itemize}
Studying these fundamental questions leads to the conclusion that the collapse of quantum wave function is fundamental to the computing capabilities of human mind. Either the theory put forth by Penrose \cite{penrose1989emperor} or that of Stapp \cite{stapp2000importance} or a different theory incorporating quantum wave function collapse is \textbf{required} to explain the computing capabilities of the human mind. Human brain is not a classical computing system.
\section{Definitions}
\subsection{Turing Machine:}

A Turing Machine \cite{turing1936a} can be formally specified as a quadruple $T = (Q,\Sigma,s,\delta)$, where:
\begin{itemize}
    \item $Q$ is a finite set of states $q$
    \item $\Sigma$ is a finite set of symbols
    \item $s$ is the initial state $s \in Q$
    \item $\delta$ is a transition function determining the next move:\\
        $\delta: (Q \times \Sigma) \rightarrow (\Sigma \times {L,R} \times Q)$
\end{itemize}
In addition to the quadruple of mathematical objects that specify a Turing Machine, a Turing Machine, as described by Turing also contains an infinite tape. Hence the definition of Turing Machine would contain an additional statement:
\begin{enumerate}
    \item The memory tape of a Turing Machine is infinite.
\end{enumerate}

\subsection{Computing Machine:}

In \cite{turing1936a}, Turing uses the term 'computing machine' to describe Turing Machines. In this paper, by computing machine, I imply machines that are similar in construction to a Turing Machine, but with a finite memory tape. A computing machine can be formally specified as a quadruple $T = (Q,\Sigma,s,\delta)$, where:
\begin{itemize}
    \item $Q$ is a finite set of states $q$
    \item $\Sigma$ is a finite set of symbols
    \item $s$ is the initial state $s \in Q$
    \item $\delta$ is a transition function determining the next move:\\
        $\delta: (Q \times \Sigma) \rightarrow (\Sigma \times {L,R} \times Q)$
\end{itemize}
In addition to the quadruple of mathematical objects that specify a Computing Machine, a Computing Machine contains a finite tape, which means that any computing machine is not computationally as powerful as a Turing Machine. Hence the definition of Computing Machine would contain an additional statement:
\begin{enumerate}
    \item The memory tape of a Computing Machine is finite.
\end{enumerate}
For simplicity, the quadruple $T = (Q,\Sigma,s,\delta)$ would be called as a 'program' in this paper.
\section{Can a Turing Machine construct any mathematical object ?}
Inspired by the great works of Russel \cite{russell1902letter}, Godel \cite{godel1931formal}, Turing \cite{turing1936a}, to study the fundamental nature of human mind, we employ two big weapons of modern mathematics: \textit{self reference} and \textit{proof by contradiction}. We construct a decision problem whereby a Turing Machine has to compute a mathematical truth about itself.\\

Given a formal system $F$, a Turing Machine, in principle can compute Non G\"{o}del Type true statements within the formal system $F$. Turing Machines can also check the validity of a proof within a formal system $F$ algorithmically. But can a Turing Machine construct \textit{any} mathematical object? This question is difficult to answer directly, hence instead of asking whether a Turing Machine can construct any mathematical object, let's ask: " For any mathematical object constructed by a human mathematician, can a Turing Machine compute all statements associated with the definition of of the mathematical object?". If a Turing Machine can do so, then it implies that Turing Machines are equivalent to human mathematicians, and therefore a Turing Machine should be able to construct new mathematical objects just like human mathematicians.\\

First, let's assume that Turing Machines are equivalent to human mathematicians, hence all mathematical objects constructed by a human mathematician could in principle be constructed by a Turing Machine. This implies that all statements associated with the definition of any mathematical object should be computable by a Turing Machine.\\

Any mathematical object $MO$ can be fully defined using a finite set of statements $P = \{P_1, P_2,P_3,....P_k\}$.\\
Corresponding to each statement $P_k$ in the definition of an arbritrary mathematical object $MO$, one can formulate a decision problem $D_k$ as follows:
\begin{itemize}
    \item $D_k:$Is $P_k$ true for $MO$ ?
\end{itemize}
Let set $D =  \{D_1, D_2,D_3,....D_k\}$ be the collection of all such decision problems $D_k$. As per our assumption, all decision problems in set $D$ should be computable for any mathematical object $MO$. The answer to these decision problems should be 'yes', and there should exist an algorithmic method to compute the correct answer.\\

Now let's construct two mathematical objects as per definition in section 2:
\begin{enumerate}
    \item A Computing Machine $CM$ of finite memory $M$.
    \item A Turing Machine $TM$ of infinite memory.
\end{enumerate}
For computing machine $CM$, one of the statements that define $CM$ is:
\begin{statement}
     Computing machine $CM$ has finite memory $M$.
\end{statement}
Corresponding to \textbf{Statement 3.1}, let's formulate a decision problem:
\begin{dec}
   Is your memory = $M$?
\end{dec}
Note that the decision problem is self referential, as the mathematical object in this case is the computing machine itself. By computing answer to Decision Problem 3.1, the computing machine $CM$ computes a mathematical truth about itself. To compute correct answer to Decision Problem 3.1, an algorithm should be able to correctly compute memory of a computing machine. To accurately compute memory of an arbitrary computing machine, the program would have to examine the entire tape, and count the number of memory units in its tape. Consider the following algorithm, which starts with a blank input tape.\\
\begin{algorithm}[H]
\SetAlgoLined
\textbf{Require: }
 $Tape$ $T_n$: Blank Memory Tape\\
 \uIf{Scanned Cell == blank}{
    write 1 \;
  }

 \caption{Computing memory of finite memory computing machine}
\end{algorithm}
If the scanned cell is blank, Algorithm 1 writes down '\textbf{1}' on the scanned cell. For a finite memory computing machine, Algorithm 1 halts after all cells are scanned by the computing machine. Once the program halts, the tape of computing machine $CM$ can be copied to input of a Turing Machine and the number of '\textbf{1}s' in the tape can be counted.\\
\\
For a program which examines the tape, the time taken for computation of memory grows linearly as per the size of memory tape. The total time required $T$ is proportional to the memory $M$ of computing machine $CM$.
\begin{equation}
    T = kM
\end{equation}
where 'k' is some proportionality constant. For any Computing Machine $CM$, $T$ is always going to be a finite value. Hence, Decision Problem 3.1 is computable.\\

Now consider the second mathematical object, a Turing Machine $TM$ of infinite memory. The statement that defines a Turing Machine $TM$ is:
\begin{statement}
     Turing Machine $TM$ has infinite memory $M$.
\end{statement}
Corresponding to \textit{Statement 3.2}, let's formulate a decision problem:
\begin{dec}
   Is your memory infinite ?
\end{dec}
The decision problem is again self referential, as the mathematical object in this case is the Turing Machine itself. By computing answer to Decision Problem 3.2, the Turing Machine $TM$ computes a mathematical truth about itself.\\
If Algorithm 1 is implemented to compute the memory of Turing Machine, it would never halt as there are infinite memory cells to scan in a Turing Machine.
As per eq(1), the total time taken to compute memory is:
\begin{equation}
    T = \infty
\end{equation}
Hence Algorithm 1 cannot be used to compute answer to decision problem 3.2. Is there any other program that can correctly answer decision problem 3.2 in finite time? Consider Algorithm 2, which is basically a lookup table based program, where the mathematical truth is pre-programmed. Algorithm 2 does not scan the memory tape of Turing Machine, hence would halt in finite time. When implemented on a Turing Machine, and given the input: "Is your memory infinite", the output would be: "Yes".
\begin{algorithm}[H]
\SetAlgoLined
\textbf{Require: }
 $Tape$ $T_n$: Input\\
 \uIf{Input= 'Is your memory infinite?'}{
    Output = 'Yes, my memory is infinite!' \;
  }
 \caption{Lookup Table Algorithm}
\end{algorithm}
However a lookup table based algorithm can return false statements. The same algorithm would output: 'Yes, my memory is infinite' when implemented on any computing machine with sufficient finite memory, leading to a contradiction. For instance, consider the situation where one implements Algorithm 2 in a Computing Machine $CM$ of finite memory $M$. In this case, when given the input: "Is your memory infinite ? ", it would return: "Yes, my memory is infinite". This output is clearly a false statement, as the program is implemented on a computing machine of finite memory. 
 
 \begin{theorem}
No program can compute correct answer to the decision problem: "Is your memory infinite ?"  without examining the memory tape of an arbitrary computing machine. 
 \begin{proof}
(Proof by Contradiction.)
 Let there exist a program $p$, which in finite time, performs some computations and outputs a statement $O$ to the question: "Is your memory infinite ?" without examining the entire tape. Let's assume that output $O$ is always true. \\
 \\
 Such a program, by definition should always halt, hence it utilizes finite amount of memory. Consider the following thought experiment: I implement the program $p$ on a Turing Machine, and observe its functioning. I determine the maximum amount of memory, $M$ required for execution of the program.  Since the program is assumed to output true statements, when implemented on a Turing Machine, it returns: " Yes my memory tape is infinite " as the output.
 Without making any modifications to the program, or to any finite input that is required by the program, I run the exact same program on a Computing Machine $CM$ of finite memory $M$. As I have not made any modifications to the program, and since the program is deterministic, the program outputs the same statement: "Yes, my memory tape is infinite" as the output. This output is a false statement, as the memory of Computing Machine $CM$ is finite. As the program do not examine the tape, it cannot 'detect' this modification. Hence, this leads to a contradiction, and it implies that our assumption: "The output $o$ of $p$ is always true" is false. Therefore, no program, in finite time can compute correct answer to the question "Is your memory infinite ?" without examining the tape. To compute the correct answer, the program should examine the entire tape, and a program that examines the entire tape would not halt on a Turing Machine.

 \end{proof}
 \end{theorem}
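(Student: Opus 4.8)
The plan is to prove the theorem by contradiction, exactly mirroring the structure already set up in the discussion of Algorithm 2. The core idea is to exploit the fact that a program which does not examine its tape cannot distinguish between being run on a Turing Machine (infinite memory) and being run on a Computing Machine (finite memory), since the only observable difference between these two machines lives on the portion of the tape the program never inspects.

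First I would assume, for contradiction, that there exists a program $p$ that halts in finite time and always outputs the correct answer to the question ``Is your memory infinite?'' without examining the entire tape. Because $p$ halts in finite time, it can only read and write finitely many cells; let $M$ be the maximum number of cells it touches during any execution on the Turing Machine. Running $p$ on the Turing Machine, by the correctness assumption it must output ``Yes, my memory is infinite.'' Next I would transplant the identical program $p$, with the identical finite input, onto a Computing Machine $CM$ whose finite memory is at least $M$. Since $p$ touches at most $M$ cells and $CM$ supplies at least that many, the computation proceeds through the exact same sequence of configurations --- $p$ is deterministic and never queries any cell beyond the first $M$ --- so it produces the identical output ``Yes, my memory is infinite.'' On $CM$ this is a false statement, contradicting the correctness assumption. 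Therefore no such $p$ exists.

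The key steps, in order, are: (i) fix the contradiction hypothesis that $p$ is always correct and halts without scanning the whole tape; (ii) extract the finite bound $M$ on the number of cells $p$ can access, justified by the halting assumption; (iii) invoke determinism to argue that $p$ behaves identically on the Turing Machine and on a sufficiently large finite Computing Machine, because the two machines agree on every cell $p$ actually reads; and (iv) derive the contradiction from the fact that the correct answers on the two machines differ while the outputs must coincide.

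The main obstacle I expect is step (iii): making rigorous the claim that the two executions are genuinely indistinguishable. This requires arguing that ``not examining the tape'' is strong enough to guarantee that $p$'s entire computation --- every state transition, every head movement --- depends only on the contents of the finitely many cells it reads, and that these cells are identical on both machines for the same finite input. One must be careful that the program cannot indirectly detect the tape boundary (for example, by attempting to move past cell $M$ and observing different behavior); the cleanest way to close this gap is to choose the Computing Machine's memory strictly larger than the bound $M$ on cells accessed, so that the boundary is never reached and the head-movement sequences coincide exactly. Once this indistinguishability is established, the contradiction follows immediately from the determinism of Turing Machines.
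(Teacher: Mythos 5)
Your proposal is correct and takes essentially the same route as the paper's own proof: assume a correct, always-halting program $p$ exists, bound the memory $M$ it touches, transplant it unchanged onto a finite Computing Machine of sufficient memory, and derive a contradiction from determinism forcing the same (now false) output. Your added care in choosing the finite machine's memory strictly larger than $M$ so the tape boundary is never reached is a small refinement the paper glosses over, but the argument is otherwise identical in structure and substance.
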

 
 Hence, the decision problem: "Is your memory infinite?" is uncomputable for a Turing Machine. However, this decision problem is computable for the human mathematician who constructed the Turing Machine. When constructing a mathematical object called the Turing Machine, the mathematician computes a statement that is uncomputable to the Turing Machine itself. A Turing Machine cannot 'know' its memory is infinite. Therefore a Turing Machine cannot compute a mathematical truth about itself. This contradicts our assumption that Turing Machines are equivalent to human mathematicians. Therefore, our initial assumption is wrong; hence there is a fundamental difference between computation in the human mind and the Turing Machine. 
 \section{Connection to the Halting Problem }

There is a connection between computation of the statement: "A Turing Machine has infinite memory" and the halting problem. In \cite{turing1936a}, Turing showed that the halting problem is uncomputable for programs with blank input tape.

Hence, there is no algorithm to classify programs in the set of all programs $P$ into two distinct sets $P_{halt}$ and $P_{\infty}$  where $P_{halt}$ is the set of all programs that halt on any Turing Machine and $P_{\infty}$ is the set of all programs that don't halt on a Turing Machine.

\subsection{Classification of Programs}
The set of all Programs $P$ can be divided into 3 mutually exclusive sets,$P_1, P_2, P_3$ based on motion of the tape of the Turing Machine.
\begin{equation}
    P_1 \cup P_2 \cup P_3 = P
\end{equation}
 $P_1$, $P_2$, $P_3$ are defined based on the following properties.
\begin{enumerate}
  \item $\forall p \in P_1, p$ \emph{halts} in finite time, when programmed on a Turing Machine $TM$. $\forall p \in P_1, p$ also \emph{halts} when programmed on a computing machine $CM$ with finite memory tape of sufficient memory.\\
  \item $\forall p \in P_2, p$, \emph{does not halt} in any Turing Machine $TM$ and the memory tape of any Turing Machine $TM$ gets into perfect periodic motion. $\forall p \in P_2, p$ \emph{does not halt} in a computing machine $CM$ with finite memory tape of sufficient memory.\\
  
  \item $\forall p \in P_3, p$ \emph{does not halt} in a Turing Machine $TM$, and consumes infinite memory during execution of $p$. In this case, the Turing Machine $TM$ enters into a non-periodic non halting behaviour.  $\forall p \in P_3, p$ \emph{halts} in a computing machine $CM$ with any finite memory tape. This is because the program eventually consumes the entire memory of Computing Machine $CM$, due to its non periodic behaviour.
\end{enumerate}
\subsection{Halting Problem}
A program that does not halt on the Turing Machine may halt on a Computing Machine of finite memory. Hence the set of programs that halt and those do not halt are different for Turing Machines and Computing Machines of finite memory.
Let $H_{TM}$ be the set of all programs that halt on a Turing Machine $TM$.
\begin{equation}
H_{TM} = P_1 
\end{equation}
Let $H_{TM}'$ be the set of all programs that does not halt on a Turing Machine $TM$.

\begin{equation}
H_{TM}' = P_2 \cup P_3 
\end{equation}
For a computing machine $CM$ with finite memory tape $M$, the sets of programs that halt and does not halt are different from that of Turing Machine $TM$.\\
For a computing machine $CM$ with finite memory $M$, let $H_{CM}$ be the set of all programs that halt.
\begin{equation}
H_{CM} = P_1 \cup P_3 
\end{equation}
Let the programs that do not halt on a computing machine $CM$ be $H_{CM}'$.
\begin{equation}
H_{CM}' = P_2 
\end{equation}
Given any program $p \in P$, lets pose the following decision problem:
 \begin{dec}
Does $p$ halt in $X$ ? 
\end{dec}
Where $X$ can be either a Turing Machine $TM$ of infinite memory or a Computing Machine $CM$ of finite memory $M$. It is meaningless to only ask "Does this program halt" as the halting behaviour of program depends on whether it is executed in a Turing Machine of infinite memory or a Computing Machine with finite memory $M$. It is unknown whether $X = TM$ or $X = CM$. To solve \emph{Decision Problem 4.1}, One would have to perform the following computations.
\begin{enumerate}
    \item Compute to which set ($P_1$ or $P_2$ or $P_3$), $p$ belongs to.
    \item Compute memory of $X$.
\end{enumerate}
Step 1, which is classification of an arbitrary program $p$ into sets $P_1, P_2, P_3 $ can be performed by the following procedure:
\begin{itemize}
    \item Simulate $p$ in a Finite Automaton $FA$ of sufficient finite memory $M$
\end{itemize}
Depending on which set the program $p$ belongs to, one of the following things will happen:
\begin{enumerate}
    \item The program $p$ halts in finite time $t$, but the total memory is not consumed.
    \item The finite automaton repeats its own state, and falls into perfect periodic motion.
    \item The finite automaton halts, with memory full.
\end{enumerate}
\begin{itemize}
 
   \item If 1 happens, then the program $p$ is in set $P_1$.
   
   \item If 2 happens, then the program $p$ is in set $P_2$.
   
   \item If 3 happens, then the program $p$ is in set $P_3$.
\end{itemize}
A natural question would be, How to calculate $M$? If the memory of Finite Automaton $FA$ is not sufficient, some programs in set $P$ might halt because the memory requirement of the program is more than the memory of the $FA$. Hence the classification of $p$ into sets $P_1, P_2, P_3$ would be wrong. Instead of computing required memory for each program $p$, we ask a human to simply predict the memory requirement of $p$.\\
\begin{itemize}
    \item $\forall p\in P$, Let a human $H$ predict $M_p$ from the set of all natural numbers $\mathbb{N}$.
\end{itemize}
\begin{proposition}
If $p \in P_1$ or $P_2$, then the probability that the predicted memory value $M_p$ is greater than the required memory $M_{req}$ of program $p$ is 1.\\
\end{proposition}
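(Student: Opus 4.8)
The plan is to combine a structural fact with a probabilistic one: the structural fact is that for $p \in P_1 \cup P_2$ the required memory $M_{req}$ is a fixed finite natural number, while the probabilistic fact is that a draw from $\mathbb{N}$ almost surely exceeds any fixed finite threshold. Establishing these two ingredients separately and then combining them should yield the claim.

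First I would pin down $M_{req}$ for $p \in P_1 \cup P_2$. By the classification in Section 4.1, if $p \in P_1$ then $p$ halts in finite time on a Turing Machine, so it visits only finitely many tape cells and $M_{req}$ is finite. If $p \in P_2$, then $p$ does not halt, but the tape enters perfect periodic motion; a genuinely periodic computation revisits only finitely many distinct configurations and hence writes on only finitely many cells, so again $M_{req} \in \mathbb{N}$ is finite. This is precisely the feature that separates $P_1 \cup P_2$ from $P_3$, whose members consume unbounded memory. Thus in both cases $M_{req}$ is determined and finite before the human $H$ makes any prediction.

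Next I would model the human's choice as a draw of $M_p$ from $\mathbb{N}$ and isolate the event $\{M_p > M_{req}\}$. Its complement is the set $\{m \in \mathbb{N} : m \le M_{req}\}$, which contains only finitely many elements. The whole proposition then reduces to showing that this finite set carries probability $0$, whence $P(M_p > M_{req}) = 1 - 0 = 1$.

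The main obstacle is that there is no uniform probability measure on all of $\mathbb{N}$, so the phrase ``the probability that $M_p > M_{req}$'' is not literally well-defined for a single uniform draw. I would resolve this with the standard limiting argument: take the choice to be uniform over $\{1,\dots,N\}$ and let $N \to \infty$. Writing $c = |\{m \in \mathbb{N} : m \le M_{req}\}|$ for the finite, constant number of admissible undershoots, the probability of landing in this set is $c/N$, and
\begin{equation}
\lim_{N \to \infty} \frac{c}{N} = 0,
\end{equation}
so the complementary probability tends to $1$; equivalently, the cofinite set $\{m : m > M_{req}\}$ has natural density $1$ in $\mathbb{N}$. The step I expect to require the most care is justifying that $c$ is a constant independent of $N$ in this limit, which is exactly what the structural argument guarantees for $p \in P_1 \cup P_2$ and what fails for $p \in P_3$.
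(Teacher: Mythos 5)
Your proof takes essentially the same route as the paper's: a uniform draw over $\{1,\dots,N\}$ exceeds a fixed threshold $k$ with probability $1-k/N$, which tends to $1$ as $N\to\infty$. The only additions are your explicit justification that $M_{req}$ is finite for $p\in P_1\cup P_2$ (which the paper simply assumes by writing $M_{req}\in\mathbb{N}$) and your acknowledgment that no uniform measure exists on all of $\mathbb{N}$, both of which tighten rather than change the argument.
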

\begin{proof}
Let $A = \{1,2,3,4,......N\}$ be a finite set with $N$ elements. $A \subseteq \mathbb{N}$.
Assuming uniform distribution , the probability of a human predicting  a number $n$ greater than $k$ from a finite set $A$ with n elements:
\begin{equation}
    \mathbb{P}(n > k) = \frac{N - k}{N}
\end{equation}
\begin{equation}
   \mathbb{P}(n > k) = 1 - \frac{k}{N}
\end{equation}
As $N \to \infty$ or when $A = \mathbb{N}$, $\mathbb{P}(n > k) = 1$.\\
Given any program $p \in$ $P_1$ or $P_2$, with a memory requirement $M_{req} \in \mathbb{N}$, the probability that the predicted $M_{p}$ is greater than $M_{req}$, $\mathbb{P}(M_p > M_{req} ) = 1$ (As per \emph{eq}(11)).
\end{proof}
Hence the probability that any program $p$ would be correctly grouped into set $P_1$ or $P_2$ or $P_3$ is 1. Let $E_1$ be the event that denotes correct classification of all programs $p \in P$ into sets $P_1$ or $P_2$ or $P_3$. \\
\begin{equation}
\mathbb{P}(E_1) = 1 
\end{equation}
Let $E_2$ be the event corresponding to step 2, which is successful computation of memory of $X$.  
\begin{equation}
\mathbb{P}(E_2) = x
\end{equation}
$\mathbb{P}(E_2)$ depends on $X$, as $X$ can be either Turing Machine of infinite memory or a Computing Machine of finite memory.
Let $E$ be the event that denotes correct classification of all programs $p \in P$ into sets $P_{halt}$ and $P_{\infty}$ for $X$. The total probability of occurrence of $E$ is equal to product of probabilities $\mathbb{P}(E_1)$ and $\mathbb{P}(E_2)$.
\begin{equation}
\mathbb{P}(E) =  \mathbb{P}(E_1)\mathbb{P}(E_2)   
\end{equation}
 Due to uncomputability of halting problem, if $X = TM$ or the Turing Machine, the probability of occurrence of E is equal to zero.
\begin{equation}
\mathbb{P}(E) = 0    
\end{equation}
Which implies that
\begin{equation}
\mathbb{P}(E_2) = x = 0    
\end{equation}
This basically means that the probability of computing statement 2 is $0$. Hence the memory of a Turing Machine is uncomputable.

\section{Thought Experiment: Uncountably Many Environments and Infinite Tasks}
The computing capabilities of human mind are not just limited to the construction of arbitrary mathematical objects. Mathematics, in the history of human civilization is a more recent development, and the biological structure of human brain has remained almost the same since the first humans. \\

Humans have the unique ability to develop sophisticated tools using resources from the environment. Developing sophisticated tools requires humans to be able to perform complex task sequences exploiting the physics of the environment. The ability of humans to perform complex task sequences, in diverse environments is a fundamental feature of human intelligence. If all aspects of human mind can be simulated in a Turing Machine, then it should be possible to construct a humanoid robot that is capable of performing complex humanly doable task sequences.\\

Imagine a humanoid robot \textit{R} with physical capabilities similar to a typical human being. The humanoid robot \textit{R} is equipped with a Turing Machine $TM$ as the 'brain', and $TM$ can simulate any possible programs. Now let's ask the following question:\\
\begin{question}

Can a humanoid robot $R$ perform all humanly doable tasks within any humanly operable environment $E_x$ ?
\end{question}
To answer \textit{Question 5.1}, we need proper definitions of environment and tasks. An environment informally is a region of space containing a collection of physical objects arranged in a definite manner. A task, in an environment is an action sequence performed by an agent that may result in change of positions or orientations of the objects in the environment. Definitions of environment and task is dependent on definition of an object, so let's define an object first.\\
\begin{definition}[Object]
An object $o$ is a 3D solid with definite mass and geometry. Any object can be described with a finite number of parameters. For instance, an object $o$ can be completely defined by the finite set $o = \{p_0,p_1,p_2,....p_k,...p_N \}$ where $p_k$ encodes information defining geometry and mass of the object. Further, every parameter $p_k \in \mathbb{N}$.
\end{definition}
\begin{proposition}
The set of all possible objects $O$ is countably infinite.
\end{proposition}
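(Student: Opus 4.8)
The plan is to prove that the set of all possible objects $O$ is countably infinite by exhibiting an explicit bijection (or injection together with a proof of infinitude) between $O$ and a known countably infinite set, exploiting the finite-parameter encoding guaranteed by \emph{Definition 5.1}. By that definition, every object $o$ is determined by a finite tuple $o = \{p_0, p_1, \ldots, p_N\}$ with each $p_k \in \mathbb{N}$, so the first step is to observe that $O$ injects into the set of all finite sequences of natural numbers, namely $\bigcup_{N \in \mathbb{N}} \mathbb{N}^{N+1}$.

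First I would recall that $\mathbb{N}^m$ is countable for each fixed $m$, since a finite Cartesian product of countable sets is countable (iterating the standard Cantor pairing function gives an explicit bijection $\mathbb{N}^m \to \mathbb{N}$). Next I would note that a countable union of countable sets is countable, so $\bigcup_{N} \mathbb{N}^{N+1}$ is countable. One clean way to package both facts at once is to use prime factorization: map the tuple $(p_0, p_1, \ldots, p_N)$ to the single integer $\prod_{k=0}^{N} q_k^{p_k + 1}$, where $q_k$ is the $k$-th prime. By the fundamental theorem of arithmetic this map is injective, which immediately exhibits an injection from $O$ into $\mathbb{N}$ and hence shows $O$ is at most countable.

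To upgrade ``at most countable'' to ``countably infinite,'' I would then show $O$ is infinite by producing infinitely many distinct objects. For example, varying a single mass parameter over all of $\mathbb{N}$ while holding the geometry fixed yields a countably infinite family of pairwise-distinct objects, so $O$ cannot be finite. Combining the injection into $\mathbb{N}$ with the infinitude gives $|O| = \aleph_0$, as claimed.

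The main obstacle here is not the counting argument itself, which is routine, but justifying the \emph{modeling assumption} embedded in \emph{Definition 5.1}: that every physically realizable 3D solid admits a complete description by a finite tuple of natural numbers. A genuine object has a continuous mass distribution and a continuous boundary geometry, whose faithful specification would \emph{a priori} require real parameters, making the set of objects uncountable. The result therefore holds only under the paper's stipulated discretization, where masses and geometric parameters are restricted to $\mathbb{N}$; I would state explicitly that the proposition is a direct consequence of this definitional choice, and the delicate point to flag is that the subsequent thought experiment's conclusions depend on whether this finite-natural-number encoding genuinely captures all ``humanly operable'' environments.
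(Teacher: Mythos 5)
Your proposal takes essentially the same route as the paper: both encode each object's finite parameter tuple as a single natural number via prime-power (G\"{o}del) coding and conclude countability from the resulting injection into $\mathbb{N}$. Your version is in fact slightly more careful than the paper's --- the exponent $p_k+1$ keeps the map injective when a parameter equals zero, you separately establish that $O$ is infinite rather than jumping from ``injects into $\mathbb{N}$'' to ``isomorphic to $\mathbb{N}$,'' and you rightly flag that the whole result rests on the definitional stipulation that objects are finitely parameterized by naturals --- but the underlying argument is the same.
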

\begin{proof}
Let $O$ be the set of all objects. $O = \{o_{0},o_{1},o_{2},....o_{i},...o_{N} \}$. Let $o_i$ be any object from set $O$ which can be fully defined by the parameter set $o_i = \{p_{i_0},p_{i_1},p_{i_2},....p_{i_k},...p_{N_i} \}$.\\

For any object $o_i$, a unique G\"{o}del Number $G_{o_i}$ can be assigned to $o_i$, by raising each prime number to the power of each parameter and taking a product of all of them.
\begin{equation}
    G_{o_i} = 2^{p_{i_0}}3^{p_{i_1}}5^{p_{i_1}}.....q_N^{p_{i_N}}
\end{equation}
where $q_N$ is the nth prime number.
Hence any object $o_i$ can be represented by a unique G\"{o}del number $ G_{o_i}$.
\begin{equation}
    G_{o_i} \in \mathbb{N}, \forall o_i \in O
\end{equation}
Therefore the set of all possible objects $O$ is isomorphic to $\mathbb{N}$. Hence $O$ is countably infinite.
\end{proof}

For instance, any polyhedron (see \textit{Figure 1}) with defined dimensions and mass is an object. Any polyhedron can be represented by a finite number of parameters. 
\begin{figure}[ht]
\begin{center}
\includegraphics[scale=0.5]{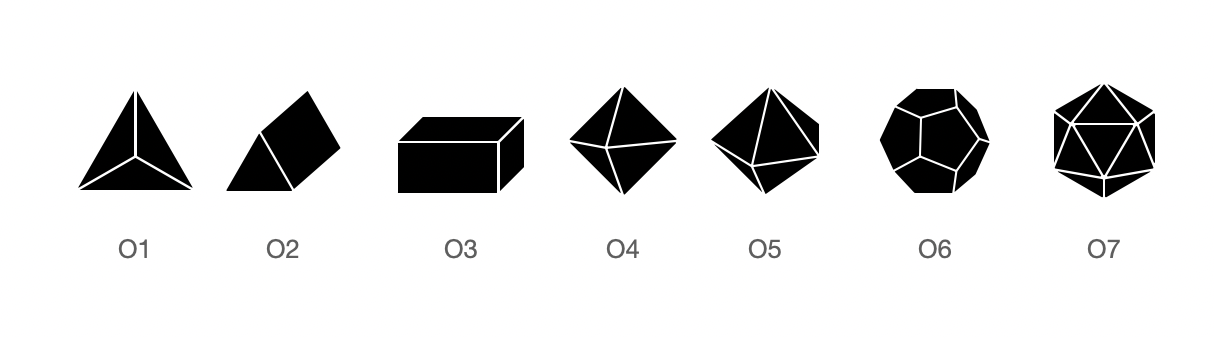}
\end{center}
\caption{A set of polyhedron shaped objects}
\centering
\end{figure}
 Consider the set $O_{poly}$ containing all polyhedron shaped objects. Each object within this set can be indexed as per the number of faces of polyhedron object. Further, the geometry of the object, along with its mass can be represented using a single natural number $G_{o_i}$ $\forall o_i$. The set of all polyhedron objects:
\begin{equation}
    O_{poly} = \{o_1,o_2,o_3,....o_k,...o_N \}
\end{equation}
 $O_{poly}$ is countably infinite and the $N$ th object is a polyhedron with $N+3$ faces.\\
 
Now an environment can be defined. An environment can be constructed by assembling a set of objects in the 3 dimensional space.
\begin{definition}[Environment]
An Environment is a cuboidal region of space with width $w$, depth $d$ and height $h$ containing a collection of objects from the set of all objects $O$. Where $l_{min} \leq w,h \leq l_{max}$, $d \in [d_0, \infty)$ and $l_{max} > l_{min} > 0$. Any environment $E_x$ can be fully defined by the set $E_x = \{O_{E_{x}},C_{E_{x}},P_{E_{x}} \} $ where $O_{E_{x}} = \{o_1,o_2,o_3,....o_k,...o_N \}$ contains the set of objects in environment $E_x$. The set $ C_{E_{x}}$ contains information related to the position and orientation of all objects within environment $E_x$. $ C_{E_{x}} = \{c_1,c_2,c_3,....c_k,...c_N \} $ where $c_k$ contains parameters defining the positional coordinates and relative orientation of any object $o_k$ with respect to a local coordinate system in the environment $E_x$. The set $P_{E_{x}}$ contains parameters that define the cuboidal region of space. $P_{E_{x}} = \{w,h,d\}$.
\end{definition}
\begin{figure}[ht]
\begin{center}
\includegraphics[scale=0.5]{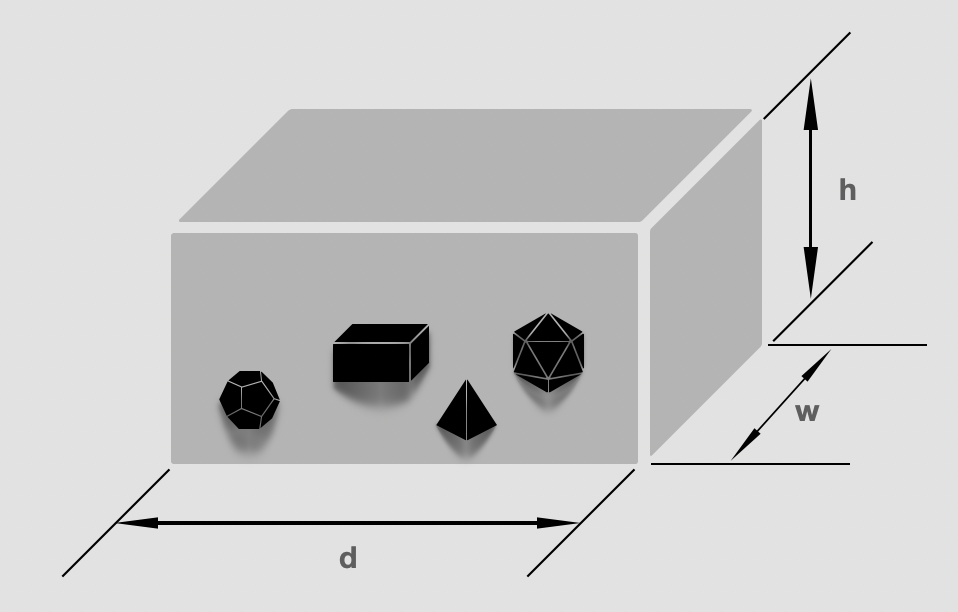}
\end{center}
\caption{An Environment consisting of a collection of objects}
\centering
\end{figure}

\begin{proposition}
The set of all possible environments $E$ is uncountably infinite.
\end{proposition}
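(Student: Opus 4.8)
The plan is to exhibit an uncountable subset of $E$ by varying a single continuous parameter of the environment while holding everything else fixed. The key observation is that, unlike the object parameters $p_k$ which are constrained to lie in $\mathbb{N}$ (and which made the set $O$ countable in Proposition 5.1), the size parameters of an environment are continuous: by Definition 5.2 the width satisfies $l_{min} \leq w \leq l_{max}$ with $l_{max} > l_{min} > 0$, so $w$ ranges over the nondegenerate real interval $[l_{min}, l_{max}]$, which is uncountable with the cardinality of the continuum. This is the single fact the whole argument will rest on.

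First I would fix a single reference environment: take the empty object collection $O_{E_x} = \emptyset$, an arbitrary admissible height $h_0 \in [l_{min}, l_{max}]$, and an arbitrary admissible depth $d_1 \in [d_0, \infty)$. I would then define a map $\phi : [l_{min}, l_{max}] \to E$ sending each real number $w$ to the environment $\phi(w) = \{\emptyset, \emptyset, \{w, h_0, d_1\}\}$ whose only degree of freedom is its width. Each $\phi(w)$ is a legitimate environment in the sense of Definition 5.2, so the image of $\phi$ is genuinely contained in $E$.

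Next I would verify that $\phi$ is injective: if $w_1 \neq w_2$ then the parameter sets $\{w_1, h_0, d_1\}$ and $\{w_2, h_0, d_1\}$ differ, so $\phi(w_1) \neq \phi(w_2)$ as elements of $E$. Since $\phi$ is an injection from the uncountable interval $[l_{min}, l_{max}]$ into $E$, the set $E$ must itself be uncountable, which completes the argument. An alternative and equally valid route is to observe that even for an environment containing a single fixed object, the positional coordinates recorded in $C_{E_x}$ range over a continuum, again furnishing an injection from an uncountable set into $E$.

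The main obstacle is not the cardinality arithmetic, which is routine, but pinning down injectivity rigorously, that is, arguing that two environments differing only in a real-valued size parameter genuinely are distinct elements of $E$ rather than being identified as the same environment. This rests on reading Definition 5.2 literally and treating an environment as the formal tuple $\{O_{E_x}, C_{E_x}, P_{E_x}\}$, so that distinct width values produce distinct tuples. I would also take care to contrast the result explicitly with Proposition 5.1: the countability of $O$ hinged on every object parameter lying in $\mathbb{N}$, whereas the continuity of an environment's spatial extent is exactly what breaks countability here.
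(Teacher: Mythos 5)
Your proof is essentially sound under a literal reading of Definition 5.2, but it takes a genuinely different route from the paper. The paper ignores the continuous size and position parameters entirely and argues combinatorially: each environment is identified with a collection of objects drawn from the countably infinite set $O$, so the set of all environments is declared isomorphic to the power set $\mathcal{P}(O) \cong \mathcal{P}(\mathbb{N}) \cong \mathbb{R}$, hence uncountable. You instead hold the object collection fixed (indeed empty) and inject the uncountable real interval $[l_{min}, l_{max}]$ into $E$ by varying the width parameter alone, checking injectivity by treating an environment as the formal tuple $\{O_{E_x}, C_{E_x}, P_{E_x}\}$. The two arguments rest on different features of the definition and have different failure modes: the paper's power-set identification silently requires that environments may contain \emph{infinite} collections of objects, since the set of merely finite subsets of a countable set is itself countable, whereas the notation $O_{E_x} = \{o_1,\dots,o_N\}$ suggests finite collections; your argument sidesteps that issue entirely but instead leans on the width $w$ genuinely ranging over all reals in $[l_{min}, l_{max}]$ rather than over some countable (e.g.\ rational or finitely parametrized) subset, a point the definition leaves implicit. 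Your fallback via the positional coordinates in $C_{E_x}$ has the same character. Both routes reach the stated conclusion, but yours is the more elementary and, given the finite-collection reading of the definition, the more robust of the two; it would be worth stating explicitly that $w$ is assumed to take arbitrary real values, since that single assumption carries the entire argument.
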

\begin{proof}
Let $E$ be the set of all objects. Each environment in $E$ contains a collection of objects from the set of all objects $O$. The set of all environments $E$, hence contains all possible collection of objects from $O$. Therefore, the set of all environments $E$ is isomorphic to the power set of $O$. \\
\begin{equation}
    E \cong \mathcal{P}(O)
\end{equation}
But $O$ is isomorphic to $\mathbb{N}$.
\begin{equation}
  O \cong \mathbb{N}
\end{equation}
Which implies,
\begin{equation}
 E \cong \mathcal{P}(\mathbb{N})
\end{equation}
But power set of natural numbers is isomorphic to the real number set.
\begin{equation}
 \mathcal{P}(\mathbb{N}) \cong \mathbb{R}
\end{equation}
Therefore, the set of all possible environments is isomorphic to the reals.
\begin{equation}
 E \cong \mathbb{R}
\end{equation}
Hence, the set $E$ is uncountably infinite.
\end{proof}
\begin{figure}[ht]
\begin{center}
\includegraphics[scale=0.5]{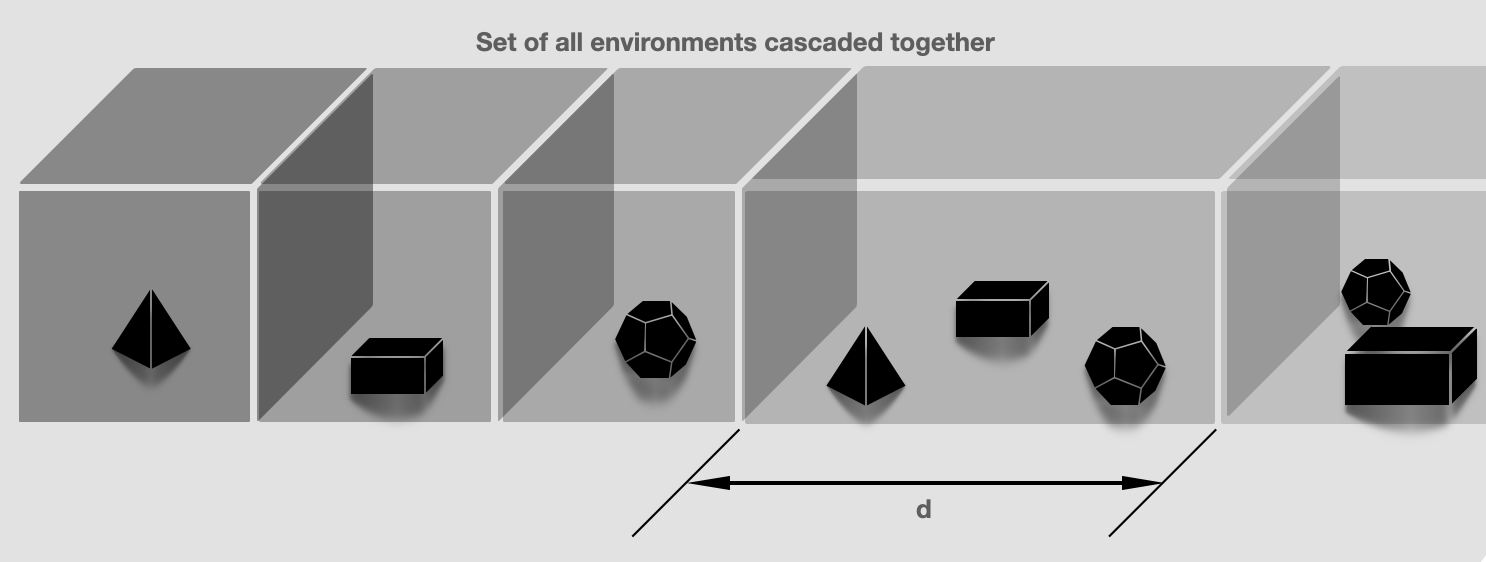}
\end{center}
\caption{Environments with polyhedron objects}
\centering
\end{figure}
In a given environment, an agent (robot or human) can manipulate objects within the environment to change the arrangement of objects. The agent can also move within the environment without affecting the arrangements of objects. To do either of these, the agent would have to perform an action sequence in the environment exploiting the physics of the environment. A finite action sequence performed by an agent that result in the movement of objects within an environment is called as a task. A task(informally) consists of a finite sequence of operations performed on a finite set of objects in any given Environment.
\begin{definition}[Task]
 Let an agent $A$ perform an action sequence $\{a_{0}, a_{1}, a_{2},,,a_{N}\}$ during the time steps $\tau = \{0,1,2,,,N\}$ in an environment $E_x$. The action sequence constitute a task if the configuration of the environment changes at any time steps in $\tau$. 
Let $C_{E_{x}}^{t_k}$ denote the configuration set of the environment $E_x$ at time step $t_k$. If a task is performed in the environment $E_x$ between time steps $t_0$ and $t_N$, then $\exists \{i,j\} \in \tau$, $ i \neq j$ for which $C_{E_{x}}^{t_i} \neq C_{E_{x}}^{t_j}$.
\end{definition}
Any task can be represented with a task instruction code. For instance, high level task instruction codes for a simple pick and place tasks could be: 

\textbf{Task 1:}"Take the laptop from the cupboard and place it on the desk”.\\
\textbf{Task 2:}"Take the laptop from the desk and place it on the cupboard".\\
Tasks can be cascaded together to form a new task. For instance, \textbf{Task 3} can be defined which is a combination of \textbf{Task 1} and \textbf{Task 2}.

$\textbf{Task 3} = \{\textbf{Task 1}, \textbf{Task 2}\}$\\

 Due to actuation limits of the human body, only a subset of the set of all possible tasks in an environment $E_x$ is doable by a human being. However, the set of all possible humanly doable tasks, within an environment with at least one object is countably infinite. This is because objects within an environment can be displaced to arbitrary positions. These displacement tasks can be cascaded together arbitrarily to form new tasks, and hence the set of all possible tasks within an environment is enumerable and countably infinite. Having defined objects, environment and tasks, let's formulate a thought experiment.

\textbf{Thought Experiment:}
From the set of all possible environments, an Environment $E_x$ is selected at random. A humanoid robot $R$ is assigned to perform all humanly doable tasks within the environment $E_x$ in infinite time. We assume that the physical characteristics, actuation limits, sensory apparatus of the humanoid robot is equivalent to a human being, hence it is physically possible for the humanoid robot to perform any humanly doable tasks. Now let's ask the following question:\\

\textbf{Question:} Can the humanoid robot $R$ perform all humanly doable tasks in any environment $E_x$ ?

To perform any task $T_k$, the Turing Machine in the robot's brain should generate action sequences $\{a_i\}_{i=0}^{N_k}$, where $N_k$ is the number of timesteps required for completing task $T_k$. Now consider a set $T_{E_x}$ which consists of an infinite sequence of humanly doable tasks in environment $E_x$. 

\begin{equation}
T_{E_x} = \{T_0,T_1,T_2,....T_{\infty}\}
\end{equation}
To perform all tasks in $T_{E_x}$, the humanoid robot would have to generate an infinite horizon action sequence $A_{E_x} = \{a_i\}_{i=0}^{\infty}$. However, the set of all possible tasks in one environment would be distinct from any other environments.
\begin{equation}
    T_{E_x} \neq T_{E_y}, \forall \{E_x,E_y\}_{x\neq y} \in E
\end{equation}
This implies that the infinite horizon action sequence required to perform all possible tasks in one environment is different from the infinite horizon action sequence required to perform all possible tasks in another environment.
\begin{equation}
    A_{E_x} \neq A_{E_y}, \forall \{E_x,E_y\}_{x\neq y} \in E
\end{equation}

Since the set of all possible environments is uncountably infinite, to perform all humanly doable tasks in any environment, the robot must be capable of generating an uncountably infinite number of different infinite horizon action sequences. However, set of all possible action sequences that can be generated by a robot with a Turing Machine as brain is countably infinite. This is due to the fact that the set of all Turing computable reals is enumerable and countably infinite \cite{turing1936a}. Hence, a humanoid robot with a Turing Machine brain cannot perform all possible humanly doable tasks in any humanly operable environment. These results point that human level general intelligence is not achievable in any Turing Machine. The ability of human beings to perform sophisticated task sequences in arbitrary environments - a fundamental aspect of human intelligence cannot be emulated by any robotic system with a Turing Machine brain.

It is important to note the connection between this thought experiment and evolution. The set of all possible environments - the set of all possible forests, valleys and caves, is uncountably infinite. Hence, species equipped with a classical computing system cannot operate in in all possible environments and perform tasks that are required for their survival. Therefore, the control unit of humans/other species should be able to generate sequences that are uncomputable for Turing Machines.
\section{Quantum Wave function collapse and \textit{Abhi} Machines:}
The fact that Turing Machines cannot emulate human intelligence leads us to the following question:
What gives the human brain hypercomputing abilities?
In particular, how is that the human brain is able to compute sequences that are uncomputable to Turing Machines? 

One possibility is that the human brain is a real computer, that is, the neurons are capable of performing computations with infinite precision real numbers.   In \cite{siegelmann1994analog}, Siegelman et.al. have shown that neural networks, with real valued weights can compute non Turing computable functions. However, because of the Bekenstein bound \cite{bekenstein1981universal}, it is not possible to store real numbered values in a physical substrate and use it directly for computations. Further arguments against real valued computations are discussed in \cite{aaronson2005guest}.

Another proposal, suggested by Penrose in \cite{penrose1999emperor} is that the brain is a quantum computer, and quantum coherence in microtubules \cite{hameroff1996conscious} enable quantum computing in brain. However, any quantum computer can be simulated by a Turing Machine, hence quantum computation by itself does not guarantee hypercomputing abilities. 

I propose a novel form of quantum computing machine, called the \textit{Abhi} Machine. \textit{Abhi} machines are similar to orchestrated objective reduction model \cite{hameroff1996conscious}, in the sense that computation in an \textit{Abhi} machine involves sequential collapse of a quantum wave function.  

An \textit{Abhi} Machine contains a two state quantum system with a wave function $\Psi(t)$ evolving as per the schrodinger equation. The input tape $I_N$ to the \textit{Abhi} Machine contains a sequence of $N$ rational numbers.
\begin{equation}
  I_N =  |*|t_1|t_2|t_3|t_4|t_5|t_6|t_7|t_k|....|t_N|*|
\end{equation}
\begin{equation}
 t_k \subseteq \mathbb{Q}
\end{equation}
At time $t=t_0$, the \textit{Abhi} Machine starts with an initial state $\ket{\Psi_0}$, and evolves as per the schrodinger equation.
\begin{equation}
\ket{\Psi(t)} = U(t)\ket{\Psi_0}
\end{equation}
            where 
\begin{equation}
U(t) =  e^{-iHt}
\end{equation}
$H$ is the hamiltonian of the system and $H_{12} \neq 0$.\\
At time $t=t_1$, the system is measured and the wave function collapses to the state $\ket{\Psi_1}$. $\ket{\Psi_1}$ is recorded as the first digit of the output sequence $O_N$. The quantum system is again evolved as per the schrodinger equation, and the system is measured at time $t = t_2$. The observed state at time $t = t_2$ is recorded as the second digit of output sequence $O_n$. The process continues until the entire input tape is read, or when time $t = t_N$. The output tape at the end of the process is:
\begin{equation}
  O_N =  |*|\ket{\Psi_1}|\ket{\Psi_2}|\ket{\Psi_3}|\ket{\Psi_4}|\ket{\Psi_5}|\ket{\Psi_6}|\ket{\Psi_7}|\ket{\Psi_k}|....|\ket{\Psi_N}|*|
\end{equation}
As this is a two state quantum system, the measured states can be represented by $\ket{0}$ or $\ket{1}$.
\begin{equation}
  \ket{\Psi_k} \in \{\ket{0},\ket{1}\}
\end{equation}
Each quantum measurement, during the operation of an \textit{Abhi} Machine is fundamentally random and non algorithmic. Corresponding to a particular input tape $I_N$, the output $O_N$ of an \textit{Abhi} Machine is unpredictable by any algorithm. The input sequence $I_N$ controls the probability of a particular output sequence $O_N$. For an input tape of size $N$, the output tape $O_N$ could be one among $2^N$ different sequences. Because of the inherent random nature associated with the quantum measurement process, \textit{Abhi} Machines can compute decimal expansion of \textit{any} real number. The set of all Turing computable numbers are however limited, and countably infinite. Because of this, an \textit{Abhi} Machine is computationally more powerful than a Turing Machine and does not obey the Church-Turing thesis.

\begin{theorem}
Abhi Machines can compute sequences that are uncomputable for a Turing Machine.
\end{theorem}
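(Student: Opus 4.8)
The plan is to prove the statement measure-theoretically rather than by exhibiting a single fixed sequence, since the word \emph{compute} here must be read in the probabilistic sense appropriate to a machine whose transitions are governed by wave-function collapse. First I would recall the fact established by Turing \cite{turing1936a} and already used in Section 5: the set of Turing-computable infinite binary sequences is countable, because each such sequence is the output of at least one program $T = (Q,\Sigma,s,\delta)$ and there are only countably many programs. Identifying an infinite binary string $\ket{\Psi_1}\ket{\Psi_2}\ket{\Psi_3}\cdots$ with a point of the space $\{0,1\}^{\mathbb{N}}$, I would equip this space with the product (Bernoulli) measure $\mu$. Under $\mu$ every singleton has measure zero, so every countable set does too; in particular the set $C$ of Turing-computable sequences satisfies $\mu(C) = 0$.

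Next I would analyze the output distribution of the Abhi Machine in the limit of an unbounded input tape. The essential physical input is the Born rule together with the assumption, central to this paper, that each collapse event is genuinely random and not the output of any hidden algorithm. By choosing a Hamiltonian with $H_{12} \neq 0$ and encoding the inter-measurement intervals in the rational entries $t_k$ of the input tape, I would arrange that the $k$-th measurement returns $\ket{0}$ or $\ket{1}$ with prescribed probabilities and that successive outcomes are independent; the natural choice of $t_k$ giving probability $\frac{1}{2}$ for each outcome makes the output a genuine sample from $\mu$. The key point is that the input tape does not determine the output string but only its distribution, so the ensemble of possible infinite outputs is exactly the probability space $(\{0,1\}^{\mathbb{N}}, \mu)$.

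With these two ingredients the conclusion is immediate: the probability that a single run of the Abhi Machine yields a Turing-computable sequence is $\mu(C) = 0$, so with probability one the machine outputs a sequence that no Turing Machine can compute. This establishes the statement, since the countable set of Turing-computable reals cannot cover a set of outcomes of measure one.

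The step I expect to be the main obstacle is the second one, and it is conceptual rather than computational. A length-$N$ tape produces only a finite string, which is trivially computable, so the theorem is really a statement about the limiting infinite output and requires either an unbounded tape or an almost-sure argument about the infinite process. More importantly, the entire argument is load-bearing on the premise that wave-function collapse is non-algorithmic: if the collapse outcomes were secretly produced by a deterministic pseudo-random rule, the output would lie in $C$ and the theorem would fail. I would therefore have to state the irreducibility of the Born rule as an explicit physical postulate, and take care to distinguish \emph{compute} in the sense of almost-surely generating a non-computable sequence from \emph{compute} in the deterministic sense of the Church-Turing thesis, since conflating the two is exactly where such an argument is most vulnerable to objection.
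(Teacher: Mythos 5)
Your proposal is correct in the same sense the paper's proof is, but it takes a genuinely different route. The paper argues by pure cardinality: letting the input tape grow without bound, the set of \emph{possible} output sequences of an Abhi Machine has cardinality $2^{\aleph_0}$, while the Turing-computable sequences form a countable set, so some possible output must be uncomputable. You argue measure-theoretically: under the Born rule with independent, unbiased collapses the infinite output is a sample from the Bernoulli measure on $\{0,1\}^{\mathbb{N}}$, the countable set of computable sequences is a null set, and therefore the output is uncomputable \emph{almost surely}. Your conclusion is strictly stronger --- the paper's argument only shows that uncomputable sequences lie somewhere in the range of possible outputs, and is consistent with the machine happening to emit a computable sequence on any particular run, whereas yours shows that a computable output is a probability-zero event. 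The price is that you must commit to a specific output distribution (fair independent bits arranged via the choice of $t_k$ and the Hamiltonian), whereas the paper needs only that every binary string is a possible outcome. Both arguments share the caveat you correctly flag: a finite tape yields a finite and hence trivially computable string, so the theorem is really a claim about the idealized infinite process, and both rest entirely on the physical postulate that collapse outcomes are not secretly produced by a deterministic rule. You state these assumptions explicitly; the paper leaves them implicit, and its passage from $k$ input tapes to an output set of cardinality $2^k$ elides exactly the finite-versus-infinite distinction you identify as the main obstacle.
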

\begin{proof}
Let $I_N$ be an input tape of size $N$ to an Abhi Machine $AM$. For a particular input tape $I_N$, an Abhi Machine $AM$ can generate upto $2^N$ output sequences. Now consider the set of possible input sequences $I = \{I_1,I_2,,,,I_k\}$\\ containing k different input sequences. Let the set of all possible output sequences corresponding to $I$ be $O = \{O_1,O_2,,,,O_j\}$. The cardinality of set $O$ is $2^k$.
\begin{equation}
   |O| = 2^k
\end{equation}
The set of all possible input sequences $I_{\aleph_{0}}$ is countably infinite and hence as:
\begin{equation}
  k\rightarrow \aleph_{0}
\end{equation}
The cardinality of the output set $|O|$ is equal to:
\begin{equation}
  |O| \rightarrow 2^{\aleph_{0}}
\end{equation}
This means that an \textit{Abhi} Machine can compute decimal expansion of any real number. However, the set of all Turing computable reals is countably infinite. Hence, \textit{Abhi} Machines can compute sequences that are uncomputable for a Turing Machine.
\end{proof}
\textit{Abhi} Machines can be used in reinforcement learning agents \cite{sutton2018reinforcement} for sampling action outputs from a probabilistic policy $\pi$. An agent with an \textit{Abhi} Machine would be able to compute infinite horizon action sequences that are uncomputable for an agent that uses a Turing Machine for action selection. Because of this, an agent that uses \textit{Abhi} Machine for action selection would be able to reach states that are unreachable for an agent that uses Turing Machine for action selection. Let's prove this statement mathematically by considering a one dimensional environment with deterministic linear dynamics.\\ 

\begin{theorem}

Given two agents $A_{TM}$ and $A_{AM}$, where $A_{TM}$ denotes an agent that uses Turing Machine for computing actions and $A_{AM}$ denotes an agent that uses \textit{Abhi} Machine for computing actions, agent $A_{AM}$ can reach states that are unreachable for $A_{TM}$.
\end{theorem}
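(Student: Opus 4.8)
The plan is to reduce the reachability question to a cardinality comparison between the sets of infinite-horizon action sequences each agent can generate. First I would fix a concrete one-dimensional environment with deterministic linear dynamics in which the agent's actions are binary, matching the two-state output alphabet $\ket{\Psi_k} \in \{\ket{0},\ket{1}\}$ of an \textit{Abhi} Machine. A convenient choice is the contracting update $s_{t+1} = s_t + a_t 2^{-(t+1)}$ with $s_0 = 0$ and $a_t \in \{0,1\}$, so that an infinite action sequence $\{a_t\}_{t=0}^{\infty}$ drives the state to a well-defined limit $s_{\infty} = \sum_{t=0}^{\infty} a_t 2^{-(t+1)} \in [0,1]$. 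The limiting state is then exactly the real number whose binary expansion is the action sequence, which cleanly ties the reachable set to the set of generable sequences.

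Next I would characterize the reachable sets. Let $R_{TM}$ and $R_{AM}$ denote the sets of limiting states reachable by $A_{TM}$ and $A_{AM}$ respectively. Since $A_{TM}$ selects actions with a Turing Machine, every action sequence it produces is Turing computable, so $R_{TM}$ is contained in the set of Turing-computable reals in $[0,1]$. By the result of Turing cited earlier, this set is enumerable and countably infinite, hence $R_{TM}$ is countable. For $A_{AM}$, the preceding theorem (that an \textit{Abhi} Machine can compute sequences uncomputable for a Turing Machine, with output cardinality $2^{\aleph_0}$) guarantees that an \textit{Abhi} Machine can produce the binary expansion of any real number; feeding these sequences into the same dynamics yields $R_{AM} = [0,1]$, which is uncountable.

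The conclusion then follows from a pure cardinality argument: because $R_{TM}$ is countable while $[0,1]$ is uncountable, the difference $[0,1] \setminus R_{TM}$ is nonempty, and in fact uncountable. Any state $x$ in this difference is reachable by $A_{AM}$ but not by $A_{TM}$, which proves the theorem.

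The main obstacle I anticipate is making the notion of \emph{reaching a state} rigorous in the infinite-horizon limit rather than at any finite time. At every finite timestep the two agents can reach exactly the same states, since every finite binary prefix is trivially computable; the separation appears only in the limit, so I must argue that the limiting state is the relevant notion of a reachable state and that it is well defined, which the contracting dynamics guarantee by ensuring convergence. A secondary technical nuisance is the non-uniqueness of binary expansions for dyadic rationals, but since the ambiguous points form only a countable set they affect neither the countability of $R_{TM}$ nor the uncountability of the states reachable by $A_{AM}$, so the cardinality argument survives intact.
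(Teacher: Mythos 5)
Your proposal takes essentially the same route as the paper: both encode the infinite binary action sequence as the real number $\sum_{k} u_k 2^{-(k+1)} \in [0,1]$ reached under contracting dynamics over a finite horizon, and then conclude by comparing the countable set of Turing-computable reals with the uncountable set of outputs attributed to the \textit{Abhi} Machine. Your two added remarks --- that the separation only appears in the infinite-horizon limit, and that dyadic rationals have non-unique expansions but form a countable set --- are careful refinements the paper omits, not a different argument.
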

\begin{proof}
Consider a one dimensional environment with determinstic linear dynamics. 

\begin{equation}
    \dot{x} = \alpha u
\end{equation}
where $k$ is a constant, $u$ denotes the action selected by the agent, and $x$ denotes the state of the agent in the environment. Further, $u \in \{0,1\}$. Now consider two agents $A_{TM}$ and $A_{AM}$ living in the same environment that selects actions at each timestep $t_k$. We consider a special problem setup where the time difference between action selections varies over the entire time horizon. For instance, the agent applies $u_0$ at time $t_0 =0$, until time $t = t_1 = \frac{1}{2}$ then applies $u_1$ at time $t_1 = \frac{1}{2}$, until time $t = t_2$ and then applies $u_2$ at time $t_2 = \frac{1}{4}$ and so on. In this manner, the total time horizon $T = 1$, but an agent performs an infinite sequence of actions $\{u_k\}_{k=0}^{\infty}$ in the time horizon $T$. Let $\delta t_k$ denote the time difference between timesteps $t_{k+1}$ and $t_k$.
\begin{equation}
   \delta t_k = t_{k+1} - t_k 
\end{equation}
 $\delta t_{k+1} = \frac{\delta t_k}{2}$ and  $\delta t_{0} = \frac{1}{2}$.
\begin{equation}
    \forall t \in [t_k, t_{k+1}), u(t) = u_k
\end{equation}
and 
\begin{equation}
   u_k \in \{0,1\}
\end{equation}
When $u_k =0$, the dynamics equation becomes:
\begin{equation}
 \dot{x} = 0,  \forall t \in [t_k, t_{k+1})
\end{equation}
Upon integrating the dynamics equation, the state $x$ at timestep $t_{k+1}$, $x_{k+1} = x_k$.

When $u_k =1$, the dynamics equation becomes:
\begin{equation}
 \dot{x} = \alpha,  \forall t \in [t_k, t_{k+1})
\end{equation}
Upon integrating the dynamics equation, the state $x$ at timestep $t_{k+1}$, $x_{k+1} = x_k + \alpha (t_{k+1} - t_k)$.\\ 

The dynamics equation for both cases ($u_k = 0$ and $u_k = 1$) can be combined and expressed as:
\begin{equation}
   x_{k+1} = x_k + \alpha .\delta t_k. u_k 
\end{equation}
After applying an infinite sequence of actions $\{u_k\}_{k=0}^{\infty}$, the agent would reach state $x_{\infty} = x(T)$ at time $t = T = 1$.
\begin{equation}
  x(T) = x_0 + \alpha \Sigma_{k=0}^{\infty} \delta t_k u_k
\end{equation}
As $\delta t_{k+1} = \frac{\delta t_k}{2}$ and  $\delta t_{0} = \frac{1}{2}$, eq(41) becomes:

\begin{equation}
  x(T) = x_0 + \alpha \Sigma_{k=0}^{\infty} \frac{1}{2^{k+1}} u_k
\end{equation}
Now we consider two agents $A_{TM}$ and $A_{AM}$, where $A_{TM}$ performs an infinite horizon action sequence that is precomputed by a Turing Machine and  $A_{AM}$ performs an infinite horizon action sequence that is precomputed by an \textit{Abhi} Machine. Let $x_0 = 0$,$\alpha = 1$, so the state after time $T$, $x(T)$ is given by:
\begin{equation}
  x(T) =  \Sigma_{k=0}^{\infty} \frac{1}{2^{k+1}} u_k
\end{equation}
As $u_k \in \{0,1\}$, $x(T)$ converges to a value between $0$ and $1$. This is due to the fact that the R.H.S term is the expression for binary expansion of a real number between $0$ and $1$. The value of $x(T)$ depends on the action sequence $\{u_k\}_{k=0}^{\infty}$. Let $  x(T)_{A_{TM}}$ denote the terminal state for an agent that uses Turing Machine for computation of actions and let $  x(T)_{A_{AM}}$ denote the terminal state for an agent that uses \textit{Abhi} Machine for computation of actions. If the action sequence is computed from a Turing Machine, then the $x(T)$ would be a computable real number between $0$ and $1$. Since the set of all computable real numbers is enumerable, this means that $x(T)$ cannot acquire an uncountably infinite number of state values. In other words, only a subset of states between 0 and 1 is reachable for an agent that uses Turing Machine for computing actions.

\begin{equation}
  x(T)_{A_{TM}} \in X_{TM}\subseteq [0,1]
\end{equation}
where $X_{TM}$ denotes the enumerable set of states that are reachable for an agent that uses Turing Machine for computing actions.
\begin{equation}
  x(T)_{A_{AM}} \in X_{AM} = [0,1]
\end{equation}
where $X_{AM}$ denotes the enumerable set of states that are reachable for an agent that uses \textit{Abhi} Machine for computing actions. Clearly, $X_{TM} \subseteq X_{AM}$ and hence an agent that uses \textit{Abhi} Machine for computing actions can reach states that are unreachable for an agent that uses Turing Machine for computing actions.\\

The above result can also be extended to multidimensional nonlinear dynamical systems.
\end{proof}
The \textit{Theorem 6.2} indicates that \textit{Abhi} Machines and Turing Machines are not mathematically equivalent, as it is possible to distinguish between the two in thought experiments. Since \textit{Abhi} machines can compute decimal expansion of any real number, agents equipped with \textit{Abhi} machines as control unit can perform tasks that cannot be performed by agents equipped with Turing Machines as control unit. From the thought experiment in Section 5, we saw that to perform tasks in all possible environments, the agent should be to be able to generate an uncountably infinite number of different infinite sequences. This is possible only if the action selection of the agent is due to some form of quantum wave function collapse, similar to the functioning of \textit{Abhi} Machines. Hence, I argue that the ability of the human mind to compute action sequences for performing tasks in an uncountably infinite number of environments is due to quantum computation in the human brain.\\

\section{ A Note about Consciousness}
From the above results, it is clear that the randomness induced by a quantum measurement process could enable hypercomputation. The chemical reactions in the human brain, at a neuronal level could be sources of quantum noise, and this could be a plausible explanation for the hypercomputing abilities of human mind. The human brain is not a simple \textit{Abhi} Machine that generates a random sequence as output. A plausible scenario is that each action selection might involve collapse of a quantum wave function, and the probability of selection of the action is controlled by some classical computation. It is unclear if consciousness is connected to the wave function collapse process as Penrose has suggested in \cite{penrose1994shadows}. This is quite plausible, however since there is no mathematical definition of consciousness, it is difficult to assert the claim that consciousness emerges due to quantum wave function collapse. 

\section{Discussion}

These results invalidate the Church-Turing thesis, and they also imply that no digital computing system  can reach human level intelligence. For achieving human level general intelligence, the computing system should have some form of quantum computation, involving sequential quantum wave function collapse. These findings, point us to the right direction for building human level general intelligent systems.

\section{Contributions}
\begin{itemize}
 \item Provided Strong arguments against Church Turing thesis.
  \item Showed that quantum computation is \textbf{required} for developing systems with human level general intelligence.
\end{itemize}

\section*{Acknowledgments}
I would like to thank Prof. Sir Roger Penrose for email conversations. 

\bibliographystyle{unsrt}  
\bibliography{references}

\begin{thebibliography}{10}

\bibitem{turing1936a}
Alan~M. Turing.
\newblock On computable numbers, with an application to the
  {E}ntscheidungsproblem.
\newblock {\em Proceedings of the London Mathematical Society}, 2(42):230--265,
  1936.

\bibitem{campbell2002deep}
Murray Campbell, A~Joseph Hoane~Jr, and Feng-hsiung Hsu.
\newblock Deep blue.
\newblock {\em Artificial intelligence}, 134(1-2):57--83, 2002.

\bibitem{silver2017mastering}
David Silver, Julian Schrittwieser, Karen Simonyan, Ioannis Antonoglou, Aja
  Huang, Arthur Guez, Thomas Hubert, Lucas Baker, Matthew Lai, Adrian Bolton,
  et~al.
\newblock Mastering the game of go without human knowledge.
\newblock {\em nature}, 550(7676):354--359, 2017.

\bibitem{krizhevsky2012imagenet}
Alex Krizhevsky, Ilya Sutskever, and Geoffrey~E Hinton.
\newblock Imagenet classification with deep convolutional neural networks.
\newblock {\em Advances in neural information processing systems}, 25, 2012.

\bibitem{ramesh2022hierarchical}
Aditya Ramesh, Prafulla Dhariwal, Alex Nichol, Casey Chu, and Mark Chen.
\newblock Hierarchical text-conditional image generation with clip latents.
\newblock {\em arXiv preprint arXiv:2204.06125}, 2022.

\bibitem{brown2020language}
Tom~B. Brown, Benjamin Mann, Nick Ryder, Melanie Subbiah, Jared Kaplan,
  Prafulla Dhariwal, Arvind Neelakantan, Pranav Shyam, Girish Sastry, Amanda
  Askell, Sandhini Agarwal, Ariel Herbert-Voss, Gretchen Krueger, Tom Henighan,
  Rewon Child, Aditya Ramesh, Daniel~M. Ziegler, Jeffrey Wu, Clemens Winter,
  Christopher Hesse, Mark Chen, Eric Sigler, Mateusz Litwin, Scott Gray,
  Benjamin Chess, Jack Clark, Christopher Berner, Sam McCandlish, Alec Radford,
  Ilya Sutskever, and Dario Amodei.
\newblock Language models are few-shot learners, 2020.

\bibitem{gao2021kpam}
Wei Gao and Russ Tedrake.
\newblock kpam 2.0: Feedback control for category-level robotic manipulation.
\newblock {\em IEEE Robotics and Automation Letters}, 6(2):2962--2969, 2021.

\bibitem{10.1093/mind/LIX.236.433}
A.~M. TURING.
\newblock {I.—COMPUTING MACHINERY AND INTELLIGENCE}.
\newblock {\em Mind}, LIX(236):433--460, 10 1950.

\bibitem{penrose1989emperor}
Roger Penrose.
\newblock The emperor’s new mind (new preface (1999) ed.), 1989.

\bibitem{penrose1994shadows}
Roger Penrose.
\newblock {\em Shadows of the Mind}, volume~4.
\newblock Oxford University Press Oxford, 1994.

\bibitem{lucas1961minds}
John~R Lucas.
\newblock Minds, machines and g{\"o}del1.
\newblock {\em Philosophy}, 36(137):112--127, 1961.

\bibitem{hameroff1996conscious}
Stuart~R Hameroff and Roger Penrose.
\newblock Conscious events as orchestrated space-time selections.
\newblock {\em Journal of consciousness studies}, 3(1):36--53, 1996.

\bibitem{hameroff1996orchestrated}
Stuart Hameroff and Roger Penrose.
\newblock Orchestrated reduction of quantum coherence in brain microtubules: A
  model for consciousness.
\newblock {\em Mathematics and computers in simulation}, 40(3-4):453--480,
  1996.

\bibitem{hameroff2014consciousness}
Stuart Hameroff and Roger Penrose.
\newblock Consciousness in the universe: A review of the ‘orch or’theory.
\newblock {\em Physics of life reviews}, 11(1):39--78, 2014.

\bibitem{stapp2004mind}
Henry~P Stapp.
\newblock Mind, matter, and quantum mechanics.
\newblock In {\em Mind, matter and quantum mechanics}, pages 81--118. Springer,
  2004.

\bibitem{stapp2000importance}
Henry~P Stapp.
\newblock The importance of quantum decoherence in brain processes.
\newblock {\em arXiv preprint quant-ph/0010029}, 2000.

\bibitem{misra1977zeno}
Baidyanath Misra and EC~George Sudarshan.
\newblock The zeno’s paradox in quantum theory.
\newblock {\em Journal of Mathematical Physics}, 18(4):756--763, 1977.

\bibitem{tegmark2000importance}
Max Tegmark.
\newblock Importance of quantum decoherence in brain processes.
\newblock {\em Physical review E}, 61(4):4194, 2000.

\bibitem{russell1902letter}
Bertrand Russell.
\newblock Letter to frege.
\newblock {\em From Frege to G{\"o}del}, pages 124--125, 1902.

\bibitem{godel1931formal}
Kurt G{\"o}del.
\newblock {\"U}ber formal unentscheidbare s{\"a}tze der principia mathematica
  und verwandter systeme i.
\newblock {\em Monatshefte f{\"u}r mathematik und physik}, 38(1):173--198,
  1931.

\bibitem{siegelmann1994analog}
Hava~T Siegelmann and Eduardo~D Sontag.
\newblock Analog computation via neural networks.
\newblock {\em Theoretical Computer Science}, 131(2):331--360, 1994.

\bibitem{bekenstein1981universal}
JD~Bekenstein.
\newblock Universal upper bound on the entropy-to-energy ratio for bounded
  systems.
\newblock {\em Physical Review. D, Particles Fields}, 23(2):287--298, 1981.

\bibitem{aaronson2005guest}
Scott Aaronson.
\newblock Guest column: Np-complete problems and physical reality.
\newblock {\em ACM Sigact News}, 36(1):30--52, 2005.

\bibitem{penrose1999emperor}
R.~Penrose and M.~Gardner.
\newblock {\em The Emperor's New Mind: Concerning Computers, Minds, and the
  Laws of Physics}.
\newblock Popular Science Series. OUP Oxford, 1999.

\bibitem{sutton2018reinforcement}
Richard~S Sutton and Andrew~G Barto.
\newblock {\em Reinforcement learning: An introduction}.
\newblock MIT press, 2018.

\end{thebibliography}

\end{document}